\newcommand{\Mu}{\mathrm{M}}
\newcommand{\Omicron}{\mathrm{O}}
\newcommand{\cdummy}{\cdot}
\newcommand{\emdash}{---}
\newcommand{\assign}{:=}
\newcommand{\mathd}{\mathrm{d}}
\newcommand{\nosymbol}{}
\newcommand\T{\rule{0pt}{2.6ex}}
\newcommand\B{\rule[-1.2ex]{0pt}{0pt}}
\newenvironment{steps}{%
    \begin{list}{\stepcounter{algoline}\tiny{\arabic{algoline}}}{
        \setlength{\topsep}{0pt}%
        \setlength{\itemsep}{0pt}%
        \setlength{\leftmargin}{1em}%
        \setlength{\labelwidth}{1em}%
    }%
}{%
    \end{list}%
}
\newcommand\labelitem[1]{\item\addtocounter{algoline}{-1}%
    \refstepcounter{algoline}\label{#1}}
\theoremstyle{definition}
\newtheorem{proposition}{Proposition}
\newtheorem{algorithm}{Algorithm}
\newtheorem{remark}{Remark}
\newtheorem{corollary}{Corollary}
\newtheorem{theorem}{Theorem}
\newtheorem{lemma}{Lemma}
\begin{document}

\title[Space Complexity of Fast D-Finite Function Evaluation]{A Note on the
Space Complexity\\ of Fast D-Finite Function Evaluation}

\author{Marc Mezzarobba}
\address{
    Inria, AriC, LIP (UMR 5668 CNRS-ENS Lyon-Inria-UCBL)\\
    ENS de Lyon, Lyon, France
}
\email{marc@mezzarobba.net}

\maketitle

\begin{abstract}
  We state and analyze a
  generalization of the ``truncation trick'' suggested by Gourdon and Sebah to
  improve the performance of power series evaluation by binary splitting. It
  follows from our analysis that the values of D\mbox{-}finite functions
  (i.e., functions described as solutions of linear differential equations
  with polynomial coefficients) may be computed with error bounded by $2^{-
  p}$ in time $\mathrm{O} (p (\lg p)^{3 + o (1)})$
  and space $\mathrm{O} (p)$. The standard fast algorithm for
  this task, due to Chudnovsky and Chudnovsky, achieves the same time
  complexity bound but requires $\mathrm\Theta (p \lg p)$ bits of
  memory.
\end{abstract}

\section{Introduction}\label{sec:intro}

Binary splitting is a well-known and widely applicable technique for the fast
multiple precision numerical evaluation of rational series. For any series
$\sum_n s_n$ with \smash{$\limsup_n  \left| s_n \right|^{1 / n} < 1$} whose
terms~$s_n$ obey a linear recurrence relation with polynomial coefficients,
e.g.,
\[ \ln 2 = \sum_{n = 0}^{\infty} s_n, \hspace{2em} s_n = \frac{1}{(n + 1
  ) 2^{n + 1}}, \hspace{2em} 2 (n + 2) s_{n + 1} -
  (n + 1) s_n = 0, \]
the binary splitting algorithm allows one to compute the partial sum $\sum_{n
= 0}^{N - 1} s_n$ in $\Omicron (\Mu (N (\lg N)^2
))$ bit
operations~{\cite{ChudnovskyChudnovsky1990,BrentZimmermann2010}}. Here $\Mu
(n)$ stands for the complexity of multiple precision integer
multiplication, and $\lg \nosymbol$ denotes the binary logarithm. As $N =
\Omicron (p)$ terms of the series are enough to make the
approximation error less than~$2^{- p}$, the complexity of the algorithm is
softly linear in the precision~$p$, assuming $\Mu (n) = \Omicron
(n (\lg n)^{\Omicron (1)})$.

Methods based on binary splitting tend to be favored in practice even in cases
when asymptotically faster algorithms (typically AGM
iterations~{\cite{BorweinBorwein1987}}) would apply. One high-profile example
is the computation of billions of digits of classical constants such as $\pi$,
$\zeta (3)$ or~$\gamma$. Basically all record computation in
recent years were achieved by evaluating suitable series using variants of
binary splitting~{\cite{GourdonSebah:records,Yee:records}}.

A drawback of the classical binary splitting algorithm, both from the
complexity point of view and in practice, is its comparatively large memory
usage. Indeed, the algorithm amounts to the computation of a product tree of
matrices derived from the recurrence{\emdash}see Sect.~\ref{sec:classical}
below for details. The intermediate results are matrices of rational numbers
whose bit sizes roughly double from one level to the next. Near the root,
their sizes can (and in general do) reach $\mathrm\Theta (p \lg p)$,
even though the output has size~$\mathrm\Theta (p)$.

However, the space complexity can be lowered to~$\Omicron (p)$
using a slight variation of the classical algorithm. The basic idea is to
truncate the intermediate results to a precision~$\Omicron (p)$
when they start taking up more space than the final result. Of course, these
truncations introduce errors. To make the trick into a genuine algorithm, we
need to analyze the errors, add a suitable number of ``guard digits'' at each
step and check that the space and time complexities of the resulting process
stay within the expected bounds.

The opportunity to improve the practical behavior of binary splitting using
truncations has been noticed by authors of implementations on several
occasions over the last decade or so. Gourdon and
Sebah~{\cite{GourdonSebah2001}} describe truncation as a ``crucial''
optimization. Besides the expected drop of memory usage, they report running
time improvements by an ``appreciable'' constant factor. Cheng et
al.~{\cite{ChengHanrotThomeZimaZimmermann2007}} compare truncation with
alternative (less widely applicable but sometimes more efficient) approaches.
Most recently, Kreckel~{\cite{Kreckel2008}} explicitly asks how to make sure
that the new roundoff errors do not affect the correctness of the result.

Indeed, the above-mentioned error analysis did not appear in the literature
until very recently. An article by
Yakhontov~{\cite{Yakhontov2011a,Yakhontov2011}} now provides the required
bounds in the case of the generalized hypergeometric series~$_p F_q$, which
covers all examples where the truncation trick had been used before. But the
applicability of the method is actually much wider.

The purpose of this note is to present a more general and arguably simpler
analysis. Our version is more general in two main respects.
  First, besides hypergeometric series, it applies to the solutions of linear
  ordinary differential equations with rational coefficients, also known as
  {\emph{D\mbox{-}finite}} (or holonomic) series~{\cite{Stanley1980}}.
  D\mbox{-}finite series are exactly those whose coefficients obey a linear
  recurrence relation with rational coefficients, while hypergeometric series
  correspond to recurrences of the first order.
  Second, we take into account the coefficient size of the recurrence that
  generates the series to be computed. Allowing the size of the coefficients
  to vary with the target precision~$p$ makes it possible to use the modified
  binary splitting procedure as part of the ``bit burst''
  algorithm~{\cite{ChudnovskyChudnovsky1990}} to handle evaluations at general
  real or complex points approximated by rationals of size~$\mathrm\Theta (p
 )$.

Additionally, our analysis readily adapts to other applications of binary
splitting. The simplicity and generality of the proof are direct consequences
of viewing the algorithm primarily as the computation of a product tree.
See Gosper~{\cite{Gosper1990}} and
Bernstein~{\cite[{\S}12--16]{Bernstein2008}} for further comments on this point of view.

The remainder of this note is organized as follows. Section~\ref{sec:setting}
contains some notations and assumptions. In Sect.~\ref{sec:classical}, we
recall the standard binary splitting algorithm, which will serve as a
subroutine in the linear-space version. Then, in Sect.~\ref{sec:trunc}, we
state and analyze the ``truncated'' variant that achieves the linear space
complexity for general D\mbox{-}finite functions. Finally,
Sect.~\ref{sec:final} offers a few comments on other variants of the
binary splitting method and possible extensions of the analysis.

\section{Setting}\label{sec:setting}

The performance of the binary splitting algorithm crucially depends on that of
integer multiplication. Following common usage, we denote by $\Mu (n
)$ a bound on the time needed to multiply two integers of at most
$n$~bits. Currently the best theoretical bound~{\cite{Fuerer2009}} is $\Mu
(n) = \Omicron (n (\lg n) \exp \Omicron (
\lg^{\ast} n))$, where $\lg^{\ast} n = \min \{k \lg^{\circ
k} n \leqslant 1\}$. In practice, implementations such as
GMP~{\cite{GMP}} use variants of the Sch\"onhage-Strassen algorithm of
complexity $\Omicron (n (\lg n)  (\lg \lg n)
)$. We make the usual assumption~{\cite{GathenGerhard2003}} that the
function $n \mapsto \Mu (n) / n$ is nondecreasing. It follows
that $\Mu (n) + \Mu (m) \leqslant \Mu (n + m
)$. We also assume that the {\emph{space}} complexity of integer
multiplication is linear, which is true for the standard
algorithms.

Write $\mathbbm{K}=\mathbbm{Q} (i)$, and define the {\emph{bit
size}} of a number $(x + iy) / w \in \mathbbm{K}$ (where $w, x, y
\in \mathbbm{Z}$) as $\left\lceil \lg w \right\rceil + \left\lceil \lg x
\right\rceil + \left\lceil \lg y \right\rceil + 1$.
Consider a linear differential equation with coefficients in~$\mathbbm{K}
(z)$. It will prove convenient to clear all denominators (both
polynomial and integer) and multiply the equation by a power of~$z$ to write
it as
\begin{equation}
  \Bigl(a_r (z)  \Bigl(z \frac{\mathd}{\mathd z}\Bigr)^r +
  \cdots + a_1 (z) z \frac{\mathd}{\mathd z} + a_0 (z
 )\Bigr) \cdot y (z) = 0, \hspace{2em} a_k \in
  \mathbbm{Z}[i][z]. \label{eq:deq}
\end{equation}
Let~$s = \max_k \deg a_k$, and let $h_1$ denote the maximum bit size of the
coefficients of the~$a_k$. Although our complexity estimates depend on~$r$
and~$h_1$, we do not consider more general dependencies on the equation. Thus,
the~$a_k$ are assumed to vary only in ways that can be described in terms of
these two parameters. Specifically, we assume that $s = \Omicron (1
)$ and that the coefficients of $a_k (z) / a_r (0
)$ are all restricted to some bounded domain.

We also assume that~$0$ is an ordinary (i.e. nonsingular) point
of~(\ref{eq:deq}). This implies that $a_r (0) \neq 0$ and $s
\geqslant r$. The case of {\emph{regular singular}} points (those for which we
still have $a_r (0) \neq 0$ but possibly $s < r$
{\cite[Chap.~9]{Hille1976}}) is actually
similar~{\cite{vdH2001,Mezzarobba2011}}; we focus on ordinary points to avoid cumbersome notations.

Let $\rho = \min \{\left| z \right| : a_r (z) = 0\}
\in (0, \infty]$. Then any formal series solution $y (z
) = \sum_{n \geqslant 0} y_n z^n$ of~(\ref{eq:deq}) converges on the
disk $\left| z \right| < \rho$. We select a particular solution (say, by
specifying initial values $y (0), \ldots, y^{(r - 1
)} (0)$ in some fixed, bounded domain), and an evaluation
point $\zeta \in \mathbbm{K}$ with $\left| \zeta \right| < \rho$. Let~$h_2$
denote the bit size of~$\zeta$, and let $h = h_1 + h_2$. Again, $h_2$ is
allowed to grow to infinity, but we assume that $\left| \zeta \right|$~is
bounded away from~$\rho$.

Given $p \geqslant 0$, our goal is to compute a complex number $\omega \in
\mathbbm{K}$ such that $\left| \omega - y (\zeta) \right|
\leqslant 2^{- p}$. By a classical argument, which can be reconstructed by
substituting a series with indeterminate coefficients into~(\ref{eq:deq}), the
sequence~$(y_n)$ obeys a recurrence relation of the form
\begin{equation}
  b_0 (n) y_{n + r} + b_1 (n) y_{n + r - 1} + \cdots
  + b_s (n) y_{n + r - s} = 0, \hspace{2em} b_j \in \mathbbm{K}
  \left[ n \right] \label{eq:rec} .
\end{equation}
Writing $a_k (z) = a_{k, 0} + a_{k, 1} z + \cdots + a_{k, s}
z^s$, the $b_j$ are given explicitly by
\begin{equation}
  b_j (n) = \sum_{k = 0}^r a_{k, j}  (n + r - j)^k .
  \label{eq:coeffs rec}
\end{equation}
Based on the matrix form of the recurrence~(\ref{eq:rec}), set
\begin{equation}
  B (n) = \begin{pmatrix}
    \zeta C (n) & 0\\
    R & 1
  \end{pmatrix} \in \mathbbm{K} (n)^{(s + 1)
  \times (s + 1)} \label{eq:B}
\end{equation}
where
\[ C (n) = \begin{pmatrix}
     & 1 &  & \\
     &  & \ddots & \\
     &  &  & 1\\
     - \frac{b_s (n)}{b_0 (n)} & \cdots & \cdots & -
     \frac{b_1 (n)}{b_0 (n)}
 \end{pmatrix}, \hspace{2em} R = \Bigl(~
     \underbrace{0 \hspace{1em} \ldots \hspace{1em} 0}_{s - r \text{ zeroes}}
     \hspace{1em} 1 \hspace{1em} \underbrace{0 \hspace{1em} \ldots
     \hspace{1em} 0}_{r - 1 \text{ zeroes}} ~\Bigr). \]
Let $P (a, b) = B (b - 1) \cdots B (a + 1
) B (a)$ for all $a \leqslant b$. (In particular, $P (
a, a)$ is the identity matrix.)

One may check that $b_0 (n) \neq 0$ for~$n \geqslant 0$, due to
the fact that $0$~is an ordinary point of~(\ref{eq:deq}). Thus the computation
of a partial sum $S_N = \sum_{n = 0}^{N - 1} y_n \zeta^n$ reduces to that of
the matrix product $P (0, N)$. Indeed, we have
\[ ( y_{n + r - s} \zeta^n, \dots,  y_{n + r - 1} \zeta^n, S_n)^{\operatorname
T} = P(0, n) \, (y_{r - s}, \dots, y_{r-1}, 0)^{\operatorname T} \]
where $y_{r - s} = 0, \ldots, y_{- 1} = 0, y_0, \ldots, y_{r - 1}$ are easily
determined from the initial values of the differential equation.

\section{Review of the Classical Binary Splitting
Algorithm}\label{sec:classical}

Since the entries of the matrix $B (n)$ are rational functions
of~$n$, the bit size of $P (a, b)$ grows as $\Omicron (
(b - a) \lg b)$ when $b, (b - a) \rightarrow
\infty$. This bound is sharp in the sense that it is reached for some (in
fact, most) differential equations. Computing $P (a, b)$ as $B
(b - 1) \cdot \left[ B (b - 2) \cdot \left[ \cdots B
(a) \right] \right]$ then takes time at least quadratic in $b -
a$, as can be seen from the combined size of the intermediate results. The
term ``binary splitting'' refers to the technique of reorganizing the product
into a {\emph{balanced tree}} of subproducts, using the relation {$P (a,
b) = P (m, b) \cdot P (a, m)$} with $m =
\lfloor \frac{1}{2}  (a + b) \rfloor$, and so on recursively.

A slight complication stems from the fact that removing common divisors
between the numerators and denominators of the fractions appearing in the
intermediate $P (a, b) \in \mathbbm{K}^{r \times r}$ would in
general be too expensive. Multiplying the numerators and denominators
separately and doing a single final division yields better complexity bounds.
Let
\begin{equation}
  \hat{B} (n) = b_0 (n)  \check{\zeta} B (n
 ) \in \mathbbm{Z}[i][n]^{(s + 1
 ) \times (s + 1)}, \quad \zeta = \hat{\zeta} /
  \check{\zeta} \quad (\hat{\zeta} \in \mathbbm{Z} \left[ i \right],
  \check{\zeta} \in \mathbbm{Z}) . \label{eq:hat-B}
\end{equation}
The entries of $\hat{B} (n)$ are polynomials of degree at
most~$r$ and bit size~$\Omicron (h)$. To compute~$P (a, b
)$ by binary splitting, we multiply the $\hat{B} (n)$ for
{$a \leqslant n < b$} using Algorithm~\ref{algo:BinSplit}, and then divide the
resulting matrix by its bottom right entry. The general algorithm considered
here was first published by Chudnovsky and
Chudnovsky~{\cite{ChudnovskyChudnovsky1990}}, with (up to minor details) the
analysis summarized in Prop.~\ref{prop:compl classical}. The idea of binary
splitting was known long before~{\cite{Gosper1990,Bernstein2008}}.

\begin{figure}[t]
    \hrule \hbox{}
    \begin{algorithm}
    \label{algo:BinSplit}
    $\operatorname{BinSplit} (a, b)$
    \newcounter{algoline}
    \begin{steps}
    \item If $b - a \leqslant (\text{some threshold})$
    \begin{steps}
    \item Return $\hat{B} (b - 1) \cdots \hat{B} ( a)$ where $\hat{B}$ is
defined by~(\ref{eq:hat-B})
    \end{steps}
    \item else
    \begin{steps}
        \item Return $\operatorname{BinSplit} (\lfloor \frac{a + b}{2} \rfloor,
  b) \cdot \operatorname{BinSplit} (a, \lfloor \frac{a + b}{2} \rfloor)$
    \end{steps}
    \end{steps}
    \end{algorithm}
    \hrule
\end{figure}

\begin{proposition}
  {\cite{ChudnovskyChudnovsky1990}} \label{prop:compl classical} As $b, N = b
  - a, h, r \rightarrow \infty$ with $r = \Omicron (N)$,
  Algorithm~\ref{algo:BinSplit} computes an unreduced fraction equal to $P
  (a, b)$ in
  $ \Omicron (\Mu \bigl( N (h + r \lg b) \bigr) \lg N) $ operations,
  using
  $ \Omicron \bigl( {N (h + r \lg b)} \bigr) $
  bits of memory. Assuming $\Mu (n) = n (\lg n) 
  (\lg \lg n)^{\Omicron (1)}$, both bounds are
  sharp.
\end{proposition}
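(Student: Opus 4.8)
The plan is to analyze Algorithm~\ref{algo:BinSplit} literally as the evaluation of a balanced product tree of the integer matrices $\hat B(n)$, and to bound the contribution of each level of that tree separately. Correctness should require almost nothing: because $\hat B(n)=b_0(n)\,\check\zeta\,B(n)$ with $b_0(n)\neq 0$ for $n\geqslant 0$, the matrix returned by $\operatorname{BinSplit}(a,b)$ equals $\bigl(\prod_{n=a}^{b-1}b_0(n)\bigr)\check\zeta^{\,b-a}\,P(a,b)$, a nonzero scalar multiple of $P(a,b)$; and since $B(n)$ is block lower triangular with bottom-right entry $1$, that scalar is precisely the bottom-right entry of the returned matrix, so the pair consisting of this matrix and its bottom-right entry is the advertised unreduced fraction. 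Abbreviate $H=h+r\lg b$.

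For the estimates I would first record the sizes occurring in the tree. By~(\ref{eq:coeffs rec})--(\ref{eq:hat-B}) the entries of $\hat B(n)$ are Gaussian integers of bit size $\Omicron(H)$ for $0\leqslant n<b$ \emdash{} a polynomial of degree $\leqslant r$ with $\Omicron(h)$-bit coefficients, evaluated at an argument of bit size $\Omicron(\lg b)$ \emdash{} and since $s=\Omicron(1)$ the matrices are of bounded dimension, so a node spanning a subinterval of length $\ell$ holds a matrix with $\Omicron(\ell H)$-bit entries. The time then splits into two parts. First, a node at level $k$ (root at level $0$) is formed from its two children \emdash{} matrices with $\Omicron((N/2^k)H)$-bit entries \emdash{} by $\Omicron(1)$ integer multiplications of $\Omicron((N/2^k)H)$-bit operands, at cost $\Omicron(\Mu((N/2^k)H))$; there are at most $2^k$ such nodes, and since $n\mapsto\Mu(n)/n$ is nondecreasing, $2^k\,\Mu((N/2^k)H)\leqslant\Mu(NH)$, so all $\Omicron(\lg N)$ levels together cost $\Omicron(\Mu(NH)\lg N)$. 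Second, the leaves need the values $\hat B(a),\dots,\hat B(b-1)$, which I would obtain by recursing the tree down to blocks of $\Theta(\max(r,1))$ consecutive integers and evaluating each polynomial entry of $\hat B$ on each block by fast multipoint evaluation~{\cite[\S10]{GathenGerhard2003}}: one block \emdash{} degree $\leqslant r$, $\Omicron(h)$-bit coefficients, $\Theta(r)$ integer arguments of bit size $\Omicron(\lg b)$ \emdash{} costs $\Omicron(\Mu(rH)\lg r)$ bit operations, and with $\Omicron(N/r)$ blocks and $r=\Omicron(N)$ this sums, again by monotonicity of $\Mu(n)/n$, to $\Omicron((N/r)\Mu(rH)\lg N)=\Omicron(\Mu(NH)\lg N)$. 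Adding the two parts gives the stated time.

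For the memory bound I would run $\operatorname{BinSplit}$ depth first. At any instant the live data are, for each ancestor of the current node, at most one already-computed sibling result \emdash{} of size $\Omicron((N/2^k)H)$ at level $k$, hence $\Omicron(NH)$ in all \emdash{} together with the workspace of the current block's multipoint evaluation. The choice of block length $\Theta(r)$ is exactly what keeps the latter within budget: the subproduct tree of a block has a root of degree $\Theta(r)$ with $\Omicron(r\lg b)$-bit coefficients, hence bit size $\Omicron(r^2\lg b)=\Omicron(rH)$, so the whole evaluation, like the block's own product subtree, runs in $\Omicron(rH)=\Omicron(NH)$ space because $r=\Omicron(N)$. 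Releasing that workspace after each block and retaining only the $\Omicron(N/r)$ block results \emdash{} which are then combined through a tree whose levels once more sum to $\Omicron(NH)$ \emdash{} bounds the peak by $\Omicron(N(h+r\lg b))$.

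Sharpness should then follow by exhibiting a ``generic'' equation for which no cancellation occurs in the matrix products, so that a node spanning an interval of length $\ell$ has all entries of bit size $\Theta(\ell(h+r\lg b))$ \emdash{} already the bottom-right entry $\bigl(\prod_n b_0(n)\bigr)\check\zeta^{\,\ell}$ is, for generic $a_k$ and $\zeta$. The output then has bit size $\Theta(N(h+r\lg b))$, matching the space bound; and the matrix multiplications performed in the top $\lceil\tfrac12\lg N\rceil$ levels multiply $\Theta((N/2^k)(h+r\lg b))$-bit integers with $2^k\leqslant\sqrt N$, so under $\Mu(n)=n(\lg n)(\lg\lg n)^{\Omicron(1)}$ each such level costs $\Theta(\Mu(N(h+r\lg b)))$ \emdash{} the logarithmic factors change by only a constant when the argument shrinks by a polynomial factor \emdash{} whence the algorithm really does spend $\Theta(\Mu(N(h+r\lg b))\lg N)$. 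The delicate point, I expect, is precisely this leaf step in the two complexity estimates: it is the only place where $r=\Omicron(N)$ does genuine work, since evaluating the polynomial entries of $\hat B$ one leaf at a time would already cost $\Theta(Nr\,\Mu(h+r\lg b))$ \emdash{} too much once $r=\Theta(N)$ \emdash{} whereas a single multipoint evaluation over all $N$ arguments would build a subproduct tree of size $\Theta(N^2\lg b)$, overshooting the space bound (and the time bound when $h$ and $r$ are small); the block length $\Theta(r)$ is what reconciles the two requirements.
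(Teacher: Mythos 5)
Your proposal is correct and, for the parts the paper actually proves, follows the same route: the per-level size bound $\Omicron(2^{-\delta}N(h+r\lg b))$ combined with subadditivity of $\Mu$ for the time, the observation that the data live at any moment are disjoint subproducts (plus linear-space multiplication) for the memory, and a ``generic, no-cancellation'' equation for sharpness \emdash{} the paper's proof is only a sketch of exactly these three points, deferring details to \cite{ChudnovskyChudnovsky1990,Mezzarobba2011}. The genuine difference is your treatment of the leaves. The paper's sketch silently treats the matrices $\hat B(n)$ as given and never accounts for the cost of evaluating their degree-$r$ polynomial entries at the $N$ points, whereas you make this a separate step, organized as fast multipoint evaluation on $\Omicron(N/r)$ blocks of $\Theta(r)$ consecutive integers, and check that both its time $\Omicron((N/r)\Mu(rH)\lg r)\subseteq\Omicron(\Mu(NH)\lg N)$ and its workspace $\Omicron(rH)\subseteq\Omicron(NH)$ fit the stated bounds. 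Your diagnosis that this is where $r=\Omicron(N)$ matters is sound: with the straightforward Horner evaluation at each leaf the total leaf cost is on the order of $NrH$ up to logarithmic factors, which exceeds $\Mu(NH)\lg N$ once $r$ grows faster than about $\lg N$, so for the proposition in its full stated range $r\rightarrow\infty$, $r=\Omicron(N)$ some such device (or an equivalent reorganization) is really needed \emdash{} although for the way the proposition is used later in the paper (Proposition~\ref{prop:compl trunc} assumes $r=\Omicron(\lg p)$) the naive leaf computation would already be within budget. So your argument buys a self-contained justification of the $r$-dependence that the paper delegates to the references, at the cost of pinning down one specific choice of threshold and leaf procedure (legitimate, since Algorithm~\ref{algo:BinSplit} leaves both unspecified); the only caveat is that the bit-complexity (rather than ring-operation) bound for integer multipoint evaluation, with its control of coefficient growth in the remainder tree, deserves the same explicit care you gave the subproduct tree.
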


\begin{proof}[sketch]
  The bit sizes of the matrices that get multiplied together at any given
  depth~$0 \leqslant \delta < \left\lceil \lg N \right\rceil$ in the recursive
  calls are at most $C 2^{- \delta} N (h + d \lg b)$ for
  some~$C$. Since there are at most $2^{\delta}$ such products and the
  multiplication function~$\Mu (\cdummy)$ was assumed to be
  subadditive, the contribution of each level is bounded by $\Mu (C
  (b - a)  (h + d \lg b))$, whence the total
  time complexity. See {\cite{ChudnovskyChudnovsky1990,Mezzarobba2011}} for
  details. The intermediate results stored or multiplied together at any stage
  of the computation are disjoint subproducts of $B (b - 1)
  \cdots B (a)$, and we assumed the space complexity of
  $n$\mbox{-}bit integer multiplication to be $\Omicron (n)$, so
  the space required by the algorithm is linear in the combined size of the~$B
  (n)$. Finally, it is not hard to construct examples of
  differential equations that reach these bounds.
\end{proof}

\begin{remark}
  The link between our setting and the more common description of the
  algorithm for hypergeometric series is as follows. In the notation of Haible
  and Pananikolaou~{\cite{HaiblePapanikolaou1997}} also used in Yakhontov's
  article, the partial sums of the hypergeometric series are related to its
  defining parameters $a, b, p, q$ by
  \[ \begin{pmatrix}
       \tilde{s} (i + 1) \vphantom{\frac{p (i)}{q
       (i)}}\\
       S (i) \vphantom{\frac{p (i)}{q (i
      )}}
     \end{pmatrix} = \begin{pmatrix}
       \frac{p (i)}{q (i)} & 0\\
       \frac{a (i)}{b (i)}  \frac{p (i
      )}{q (i)} & b (i) q (i)
     \end{pmatrix}  \begin{pmatrix}
       \tilde{s} (i) \vphantom{\frac{p (i)}{q (
       i)}}\\
       S (i - 1) \vphantom{\frac{p (i)}{q (i
      )}}
     \end{pmatrix}, \hspace{2em} \tilde{s} (i) = \frac{b
     (i)}{a (i)} s (i) . \]
  This equation becomes $\bigl(\begin{smallmatrix}
    B (i)\\
    T (0, i)
  \end{smallmatrix}\bigr) = \bigl(\begin{smallmatrix}
    b (i) p (i) & 0\\
    a (i) p (i) & b (i) q (i
   )
  \end{smallmatrix}\bigr)  \bigl(\begin{smallmatrix}
    B (i - 1) P (i - 1)\\
    T (0, i - 1)
  \end{smallmatrix}\bigr)$ upon clearing denominators. The standard recursive
  algorithm for hypergeometric series may be seen an ``inlined'' computation
  of the associated product tree. Each recursive step is equivalent to the
  computation of the matrix product $\bigl(\begin{smallmatrix}
    B_r P_r & 0\\
    T_r & B_r Q_r
  \end{smallmatrix}\bigr)  \bigl(\begin{smallmatrix}
    B_l P_l & 0\\
    T_l & B_l Q_l
  \end{smallmatrix}\bigr)$.
\end{remark}

We return to the evaluation of a D\mbox{-}finite power series within its disk
of convergence. From the differential equation~(\ref{eq:deq}), suitable
initial conditions, the evaluation point~$\zeta$ and a target precision~$p$,
one can {\emph{compute}}~{\cite{MezzarobbaSalvy2010}} a truncation order~$N$
such that $\left| S_N - y (\zeta) \right| \leqslant 2^{- p}$ and
\begin{equation}
  \left\{\begin{array}{ll}
    N \sim Kp = \bigl({\lg (\left| \zeta \right| / \rho
   )} \bigr)^{- 1} p, \quad & \text{if } \rho < \infty\\
    N = \mathrm\Theta (p / \lg p), & \text{if } \rho = \infty .
  \end{array}\right. \label{eq:N vs p}
\end{equation}
Combined with these estimates, Proposition~\ref{prop:compl classical} implies
the following.

\begin{corollary}
  \label{cor:eval}Write $\ell = h + r \lg p$. Under the assumptions of
  Proposition~\ref{prop:compl classical}, one can compute $y (\zeta
 )$ in $\Omicron (\Mu (\ell p \lg p))$ bit
  operations, using $\Omicron (\ell p)$ bits of memory. The
  complexity goes down to $\Omicron (\Mu (\ell p))$
  operations and $\Omicron (\ell p / \lg p)$ bits of memory when
  $a_r (z)$ is a constant.
\end{corollary}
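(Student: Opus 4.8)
The plan is to run Algorithm~\ref{algo:BinSplit} on the interval $(a,b) = (0,N)$ for a suitable truncation order $N$, read off $S_N$, and round; the corollary then reduces to substituting (\ref{eq:N vs p}) into Proposition~\ref{prop:compl classical}, with some bookkeeping. First I would pick $N$ with $\left|S_N - y(\zeta)\right| \le 2^{-p-1}$ as permitted by (\ref{eq:N vs p}); such an $N$ is cheap to compute~\cite{MezzarobbaSalvy2010}. In both regimes $\lg N = \lg p + \Omicron(1)$, resp. $\lg N = \lg p - \lg\lg p + \Omicron(1)$, so $\lg N = \Theta(\lg p)$ and, taking $b = N$, the quantity $r\lg b$ of Proposition~\ref{prop:compl classical} is $\Theta(r\lg p)$; hence $N(h + r\lg b) = \Theta(N\ell)$.

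Next, $\operatorname{BinSplit}(0,N)$ returns $P(0,N)$ as an unreduced fraction whose numerator lies in $\mathbbm{Z}[i]^{(s+1)\times(s+1)}$ and whose denominator is $D = \prod_{0\le n<N} b_0(n)\check\zeta \in \mathbbm{Z}$, both of bit size $\Omicron(N\ell)$, in $\Omicron(\Mu(N\ell)\lg N)$ operations and $\Omicron(N\ell)$ bits of memory. Applying this matrix to the initial-value vector $(y_{r-s},\dots,y_{r-1},0)^{\operatorname T}$, which is read off the initial data of (\ref{eq:deq}), dividing the last coordinate by $D$, and truncating to $p + \Omicron(1)$ bits yields $\omega\in\mathbbm{K}$ with $\left|\omega - S_N\right| \le 2^{-p-1}$, hence $\left|\omega - y(\zeta)\right| \le 2^{-p}$. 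Since $\left|\zeta\right|$ is bounded away from $\rho$ the value $S_N$ is bounded, so this final division and rounding manipulate $\Omicron(N\ell)$-bit integers at a cost of $\Omicron(\Mu(N\ell))$ time and $\Omicron(N\ell)$ space, dominated by the binary-splitting step.

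It remains to plug in the two cases of (\ref{eq:N vs p}). If $\rho < \infty$, then $N = \Theta(p)$: the space is $\Omicron(\ell p)$ and the time is $\Omicron(\Mu(\ell p)\lg N) = \Omicron(\Mu(\ell p)\lg p)$, and $\Mu(\ell p)\lg p = \Omicron(\Mu(\ell p\lg p))$ because $n\mapsto\Mu(n)/n$ is nondecreasing (so $\lceil\lg p\rceil\,\Mu(\ell p)\le\Mu(\ell p\lceil\lg p\rceil) = \Omicron(\Mu(\ell p\lg p))$). If $a_r$ is a constant, i.e. $\rho = \infty$, then $N = \Theta(p/\lg p)$, so $N\ell = \Theta(\ell p/\lg p)$, the space is $\Omicron(\ell p/\lg p)$, and the time is $\Omicron(\Mu(\ell p/\lg p)\lg p) = \Omicron(\Mu(\ell p))$, again by monotonicity of $\Mu(n)/n$. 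The only points that call for any care are this passage between $\Mu(n)\lg n$ and $\Mu(n\lg n)$ and the check that recovering $P(0,N)$ from the unreduced fraction stays within budget; I expect no substantive obstacle, as the rest is direct substitution.
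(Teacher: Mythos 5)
Your proposal is correct and is essentially the paper's own (largely implicit) argument: the paper derives the corollary by combining the truncation-order estimate~(\ref{eq:N vs p}) with Proposition~\ref{prop:compl classical}, exactly as you do, and your extra bookkeeping (the final division by the bottom-right entry and the passage from $\Mu(\ell p)\lg p$ to $\Omicron(\Mu(\ell p\lg p))$ via monotonicity of $\Mu(n)/n$) just makes explicit what the paper leaves to the reader.
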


This result is the basis of more general evaluation algorithms for
D\mbox{-}finite functions~{\cite{ChudnovskyChudnovsky1990}}. Indeed, binary
splitting can be used to compute the required series sums at each step when
solving a differential equation of the form~(\ref{eq:deq}) by the so-called
method of Taylor series~{\cite{Mathews2003}}. Corollary~\ref{cor:eval} thus
extends to the evaluation of~$y$ outside the disk $\left| z \right| < \rho$.
Chudnovsky and Chudnovsky further showed how to reduce the cost of evaluation
from $\mathrm\Omega (hp) = \mathrm\Omega (p^2)$ to softly linear
in~$p$ when~$h = \mathrm\Theta (p)$. This last situation is very natural
since it covers the case where the point~$\zeta$ is itself a $\Omicron (
p)$-digits approximation resulting from a previous computation. The
method, known as the {\emph{bit burst}} algorithm, consists in solving the
differential equation along a path made of approximations of~$\zeta$ of
exponentially increasing precision. Its time complexity is $\Omicron (
\Mu (p (\lg p)^2))$~{\cite{Mezzarobba2010}}.
The improvements from the next section apply to all these settings. See
also~{\cite{vdH2007c}} for an overview of more sophisticated applications.

\section{``Truncated'' Binary Splitting}\label{sec:trunc}

The superiority of binary splitting over alternatives like summing the series
in floating-point arithmetic results from the controlled growth of
intermediate results. Indeed, in the product tree computed by
Algorithm~\ref{algo:BinSplit}, the exact representations of most subproducts
$P (a, b)$ are much more compact than $\mathrm\Theta (p
)$\mbox{-}digits approximations would be. However, as already mentioned,
the bit sizes of the $P (a, b)$ also grow larger than~$p$ near
the root of the tree. The size of a subproduct appearing at depth~$\delta$ is
roughly $2^{- \delta} N (h + r \lg N)$. Assuming $N = \mathrm\Theta
(p)$, this means that the intermediate results get significantly
larger than the output in the top $\mathrm\Theta (\lg \lg p)$ levels of
the tree.

A natural remedy is to use a hybrid of binary splitting and naive summation.
More precisely, we split the full product $P (0, N)$ into $\Delta
= \mathrm\Theta (\ln N)$ subproducts of $\Omicron (p)$ bits
each, which are computed by binary splitting. The results are accumulated by
successive multiplications at precision $\Omicron (p)$.

We make use of the following notations to state and analyze the algorithm. In
Equations (\ref{eq:norm comparison}) to (\ref{eq:Trunc-mat}) below, the
coefficients of a general matrix $A \in \mathbbm{C}^{k \times k}$ are denoted
$a_{p, q} = x_{p, q} + iy_{p, q}$ ($1 \leqslant p, q \leqslant k$) with $x_{p,
q}, y_{p, q} \in \mathbbm{R}$. Let $\left\| \cdummy \right\|$ be a
submultiplicative norm on $\mathbbm{C}^{k \times k}$, and let $\beta_k > 0$ be
such that
\begin{equation}
  \left\| A \right\| \leqslant \beta_k \mathcal{N} (A),
  \hspace{2em} \mathcal{N} (A) = \max \{\left| x_{i, j}
  \right|, \left| y_{i, j} \right|\}_{1 \leqslant i, j \leqslant k} .
  \label{eq:norm comparison}
\end{equation}
For definiteness, assume for now that $\left\| \cdummy \right\| = \left\|
\cdummy \right\|_1$ is the matrix norm induced by the vector $1$-norm. (We
will discuss this choice later.) Then it holds that
\begin{equation}
  \mathcal{N} (A) \leqslant \left\| A \right\|_1 = \max_{j = 1}^k
  \sum^k_{i = 1} \left| a_{i, j} \right| \leqslant \sqrt[]{2} k\mathcal{N}
  (A) \label{eq:norm comparison bis}
\end{equation}
and
\begin{equation}
  \left\| P (a, b) \right\| \leqslant \prod_{n = a}^{b - 1}
  \left\| B (n) \right\| \leqslant \prod_{n = a}^{b - 1} \left(1
  + \left| \zeta \right| + \left| \zeta \right| \max_{k = 1}^s  \left|
  \frac{b_k (n)}{b_0 (n)} \right|\right) .
  \label{eq:norm P}
\end{equation}
Observe that, since $1$~is an eigenvalue of~$B (n)$ and the
norm~$\left\| \cdummy \right\|$ is assumed to be submultiplicative, we have
$\left\| B (n) \right\| \geqslant 1$ for all~$n$. Besides, it is
clear from~(\ref{eq:coeffs rec}) that $\left\| B (n) \right\|$ is
bounded.

Given $a \in \mathbbm{Q}$ and $\varepsilon < 1$, let
\begin{equation}
  \operatorname{Trunc} (a, \varepsilon) = \operatorname{sgn} (a) 
  \left\lfloor 2^e  \left| a \right| \right\rfloor 2^{- e}, \hspace{1em} e =
  \left\lceil \lg \varepsilon^{- 1} \right\rceil . \label{eq:Trunc-rat}
\end{equation}
We have $\left| \operatorname{Trunc} (a, \varepsilon) - a \right|
\leqslant \varepsilon$; the size of $\operatorname{Trunc} (a, \varepsilon
)$ is $\Omicron (\lg \varepsilon^{- 1})$ for bounded~$a$;
and $\operatorname{Trunc} (a, \varepsilon)$ may be computed in~$\Omicron
(\Mu (h + e))$ bit operations where $h$~is the bit
size of~$a$. We extend the definition to matrices $A \in \mathbbm{K}^{k \times
k}$ by
\begin{equation}
  \operatorname{Trunc} (A, \varepsilon) = \bigl(\operatorname{Trunc} (
  x_{p, q}, \beta_k^{- 1} \varepsilon) + i \operatorname{Trunc} (y_{p,
  q}, \beta_k^{- 1} \varepsilon)\bigr)_{1 \leqslant p, q \leqslant
  k}, \label{eq:Trunc-mat}
\end{equation}
so that again $\left\| \operatorname{Trunc} (A, \varepsilon) - A \right\|
\leqslant \varepsilon$. Note that we often write expressions of the form
$\operatorname{Trunc} (a \star b, \varepsilon)$ for some
operator~$\star$. Though this does not affect our complexity bounds, it is
usually better to compute the approximate value of $a \star b$ directly
instead of starting with an exact computation and truncating the result. See
Brent and Zimmermann~{\cite{BrentZimmermann2010}} for some relevant
algorithms.

The complete binary splitting algorithm with truncations is stated as
Algorithm~\ref{algo:TruncBinSplit}. Its key properties are summarized in the
following propositions.

\begin{figure}[t]
    \hrule \hbox{}
\begin{algorithm}
  \label{algo:TruncBinSplit}
  $\operatorname{TruncBinSplit}(p)$ \\
  \emph{The notation $X^{(q)}$, $q = 0, 1, \ldots$
  refers to a single memory location~$X$ at different points~$q$ of the
  computation.}
  \setcounter{algoline}{0}
  \begin{steps}
  \item Set $\varepsilon = 2^{- p}$
  \labelitem{step:N}Compute~$N$ such that $\left| S_N - y (\zeta) \right| \leqslant \varepsilon$ {\cite{vdH1999,MezzarobbaSalvy2010}}
  \labelitem{step:bound}Set $\Delta = \lceil \frac{N}{p}  (h + r \lg N) \rceil$, where $h$ and $r$ are given following Eq.~(\ref{eq:deq})
  \item Compute~$M$ such that $\max_{q = 0}^{\Delta - 1} \|P (\lfloor \frac{q}{\Delta} N \rfloor, \lfloor \frac{q + 1}{\Delta} N \rfloor)\| + \varepsilon \leqslant M \leqslant C^{N / \Delta}$, where~$C$ does \ not depend on $p, h, r$ [say, by approximating the right-hand side of~(\ref{eq:norm P}) from above with $\Omicron (\lg p)$ bits of precision]
  \item Initialize $\tilde{P}^{(0)} \assign \operatorname{id} \in \mathbbm{K}^{(s + 1) \times (s + 1)}$
  \labelitem{step:loop}For $q = 0, 1, \ldots, \Delta - 1$
  \begin{steps}
    \labelitem{step:BinSplit}$\hat{Q}^{} = (\hat{Q}_{i, j}) \assign \operatorname{BinSplit} (\left\lfloor \frac{q}{\Delta} N \right\rfloor, \left\lfloor \frac{q + 1}{\Delta} N \right\rfloor)$ (Algorithm~\ref{algo:BinSplit})
    \labelitem{step:Trunc-Q}$\tilde{Q}^{(q)} \assign \operatorname{Trunc} ( \hat{Q}_{s + 1, s + 1}^{- 1} \cdot \hat{Q}, \frac{1}{2 \Delta} M^{- \Delta + 1} \varepsilon)$
    \labelitem{step:accu}$\tilde{P}^{(q + 1)} \assign \operatorname{Trunc} ( \tilde{Q}^{(q)} \cdot \tilde{P}^{(q)}, \frac{1}{2 \Delta} M^{- \Delta + q + 1} \varepsilon)$
  \end{steps}
  \item Return~$\tilde{P}^{(\Delta)}$
\end{steps}
\end{algorithm}
\hrule
\end{figure}

\begin{proposition}
  The output $\tilde{P} = \operatorname{TruncBinSplit} (p)$ of
  Algorithm~\ref{algo:TruncBinSplit} is such that $\| \tilde{P} - P (0, N
 ) \| \leqslant 2^{- p}$.
\end{proposition}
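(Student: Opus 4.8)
The plan is an error-propagation induction over the $\Delta$ iterations of the loop; essentially all the work lies in checking that the geometrically decreasing tolerances in steps~\ref{step:Trunc-Q}--\ref{step:accu} are calibrated so that the total accumulated error comes out to exactly $\varepsilon = 2^{-p}$. First I would record that each $B(n)$ is block lower-triangular (with blocks of sizes $s$ and $1$) with lower-right entry $1$, hence so is every product $P(a,b)$; in particular $1$ is an eigenvalue of $P(a,b)$, and the lower-right entry of $P(a,b)$ is $1$. Therefore in step~\ref{step:Trunc-Q} the matrix $\hat Q$ returned by $\operatorname{BinSplit}$, which by definition of $\hat B$ equals $\bigl(\prod_n b_0(n)\check\zeta\bigr) P_q$ with $P_q \assign P(\lfloor\tfrac{q}{\Delta}N\rfloor, \lfloor\tfrac{q+1}{\Delta}N\rfloor)$, satisfies $\hat Q_{s+1,s+1} = \prod_n b_0(n)\check\zeta \neq 0$ (recall $b_0(n)\neq 0$ for $n\geqslant 0$), so that $\hat Q_{s+1,s+1}^{-1}\hat Q = P_q$ exactly. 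By the composition rule $P(a,c) = P(b,c)P(a,b)$ one then has $P(0,N) = P_{\Delta-1}\cdots P_1 P_0$, which is the product the loop approximates.

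Set $\Pi_q \assign P_{q-1}\cdots P_0$ (so $\Pi_0 = \operatorname{id}$ and $\Pi_\Delta = P(0,N)$) and $E_q \assign \tilde P^{(q)} - \Pi_q$. I would prove by induction on $q = 0, 1, \dots, \Delta$ that
\[
  \|E_q\| \leqslant \tfrac{q}{\Delta}\, M^{q-\Delta}\, \varepsilon ,
\]
the case $q = \Delta$ being exactly $\|\tilde P - P(0,N)\| \leqslant \varepsilon$. The base case is $E_0 = 0$. From $\|\operatorname{Trunc}(A,\eta) - A\| \leqslant \eta$, the tolerances in steps~\ref{step:Trunc-Q}--\ref{step:accu}, and $\hat Q_{s+1,s+1}^{-1}\hat Q = P_q$, we get $\|\tilde Q^{(q)} - P_q\| \leqslant \tfrac{1}{2\Delta}M^{1-\Delta}\varepsilon$ and $\|\tilde P^{(q+1)} - \tilde Q^{(q)}\tilde P^{(q)}\| \leqslant \tfrac{1}{2\Delta}M^{q+1-\Delta}\varepsilon$. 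The remaining ingredient is the bound $\|\tilde P^{(q)}\| \leqslant M^q$.

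To obtain it, observe that since $1$ is an eigenvalue of $P_q$ and $\|\cdot\|$ is submultiplicative, $\|P_q\| \geqslant 1$, hence $M \geqslant \|P_q\| + \varepsilon > 1$; combined with the choice of $M$ (so $\|P_i\| \leqslant M-\varepsilon$) this yields $\|\Pi_q\| \leqslant (M-\varepsilon)^q \leqslant M^q - \varepsilon M^{q-1}$ for $q\geqslant 1$ (using $(1-x)^q \leqslant 1-x$ for $q \geqslant 1$, $x\in[0,1]$), so that
\[
  \|\tilde P^{(q)}\| \leqslant \|\Pi_q\| + \|E_q\| \leqslant M^q - \varepsilon M^{q-1} + \tfrac{q}{\Delta}M^{q-\Delta}\varepsilon \leqslant M^q ,
\]
the last step because $\tfrac{q}{\Delta}M^{q-\Delta} \leqslant M^{q-1}$ for $0\leqslant q\leqslant\Delta$ and $M\geqslant 1$ (the case $q=0$ being trivial, as $\|\operatorname{id}\| = 1$). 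Submultiplicativity and the triangle inequality then give
\[
  \|E_{q+1}\| \leqslant \|\tilde P^{(q+1)} - \tilde Q^{(q)}\tilde P^{(q)}\|
  + \|\tilde Q^{(q)} - P_q\|\,\|\tilde P^{(q)}\| + \|P_q\|\,\|E_q\| ,
\]
and bounding the three summands respectively by $\tfrac{1}{2\Delta}M^{q+1-\Delta}\varepsilon$, by $\tfrac{1}{2\Delta}M^{1-\Delta}\varepsilon \cdot M^q$, and by $M\cdot\tfrac{q}{\Delta}M^{q-\Delta}\varepsilon$ gives $\|E_{q+1}\| \leqslant \tfrac{q+1}{\Delta}M^{(q+1)-\Delta}\varepsilon$, closing the induction. (Throughout I use $\Delta\geqslant 1$, which holds since $N\geqslant 1$.)

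I expect the one delicate point to be the norm bound $\|\tilde P^{(q)}\| \leqslant M^q$: it is precisely there that the geometric weights $M^{-\Delta+q+1}$ in the tolerances are needed, since they keep the error accumulated up to step $q$ (namely $\leqslant M^{q-\Delta}\varepsilon$) below the slack $\varepsilon M^{q-1}$ between $(M-\varepsilon)^q$ and $M^q$, so that the per-step amplification factor $\|\tilde P^{(q)}\|$ never exceeds $M^q$; a uniform tolerance at every step would break this. Everything else is routine bookkeeping.
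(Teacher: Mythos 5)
Your proof is correct and follows essentially the same route as the paper: the same induction on $q$ with the same inductive bound $\|\tilde P^{(q)}-P(0,\lfloor qN/\Delta\rfloor)\|\leqslant \frac{q}{\Delta}M^{q-\Delta}\varepsilon$, with the truncation tolerances absorbing the two extra terms at each step. The only difference is cosmetic: you split the cross term so as to need $\|\tilde P^{(q)}\|\leqslant M^q$ (hence your small $(M-\varepsilon)^q$ argument, and your explicit check that $\hat Q_{s+1,s+1}^{-1}\hat Q$ equals the exact subproduct), whereas the paper splits it the other way and uses $\|\tilde Q^{(q)}\|\leqslant M$ together with $\|P^{(q)}\|\leqslant M^q$, which follows directly from the definition of~$M$.
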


\begin{proof}
  Set $P^{(q)} = P (0, \lfloor \frac{q}{\Delta} N \rfloor)$ and
  $Q^{(q)} = P (\lfloor \frac{q}{\Delta} N \rfloor, \lfloor
  \frac{q + 1}{\Delta} N \rfloor)$. Then, for $0 \leqslant q \leqslant
  \Delta$, it holds that
  \begin{equation}
    \| \tilde{P}^{(q)} - P^{(q)} \| \leqslant
    \frac{q}{\Delta}  \frac{\varepsilon}{M^{\Delta - q}} .
    \label{eq:induction}
  \end{equation}
  Indeed, this is true for $q = 0$. After Step~\ref{step:Trunc-Q} of each loop
  iteration, we have the bound $\| \tilde{Q}^{(q)} - Q^{(q
 )} \| \leqslant \frac{1}{2 \Delta} M^{- \Delta + 1} \varepsilon
  \leqslant \varepsilon$ since $\left\| B (n) \right\| \geqslant
  1$ for all~$n$. Using~(\ref{eq:induction}) and the inequality $\|
  \tilde{Q}^{(q)} \| \leqslant M$ from Step~\ref{step:bound}, it
  follows that
  \begin{align*}
    \|\tilde{Q}^{(q)} \tilde{P}^{(q)}- Q^{(q)} P^{(q)}\|{\leqslant}
    &
    \|\tilde{Q}^{(q)}-Q^{(q)}\| \|P^{(q)}\|+\|\tilde{Q}^{(q)}\| \|\tilde{P}^{(q)}-P^{(q)}\|\\
    {\leqslant} &
    {\frac{2 q+1}{2 {\Delta}}} {\frac{{\varepsilon}}{M^{{\Delta}-q-1}}}.
  \end{align*}
  After taking into account the truncation error from Step~\ref{step:accu}, we
  obtain
  \[ \| \tilde{P}^{(q + 1)} - P^{(q + 1)} \| = \|
     \tilde{P}^{(q + 1)} - Q^{(q)} P^{(q
    )} \| \leqslant \frac{q + 1}{\Delta}  \frac{\varepsilon}{M^{\Delta
     - q - 1}} \]
  which concludes the induction.
\end{proof}

\begin{proposition}
  \label{prop:compl trunc}Not counting the cost of Step~\ref{step:N},
  Algorithm~\ref{algo:TruncBinSplit} runs in time
  \begin{equation}
    \left\{\begin{array}{ll}
      \Omicron \bigl({\Mu (p) (h + r \lg p
     ) \lg p} \bigr), \quad & \text{if } \rho < \infty,\\
      \Omicron \bigl({\Mu (p)  (h + r \lg p
     )} \bigr), & \text{if } \rho = \infty,
    \end{array}\right. \label{eq:time complexity}
  \end{equation}
  as $p, h, r \rightarrow \infty$ with $r = \Omicron (\lg p)$ and
  $h = \Omicron (p)$. In both cases, it uses~$\Omicron (p
 )$ bits of memory (where the hidden constant is independent of
  $h$~and~$r$, under the same growth assumptions).
\end{proposition}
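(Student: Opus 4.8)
The plan is to bound the cost of one iteration of the loop in Step~\ref{step:loop}, sum over the $\Delta$ iterations, and treat the remaining work — the computations of $\Delta$ and $M$, the base case of $\operatorname{BinSplit}$, and the final matrix--vector product that recovers $y(\zeta)\approx S_N$ from $\tilde P^{(\Delta)}$ — as lower-order terms. Throughout I will use~(\ref{eq:N vs p}): $N=\mathrm\Theta(p)$ if $\rho<\infty$ and $N=\mathrm\Theta(p/\lg p)$ if $\rho=\infty$, so that $\lg N=\mathrm\Theta(\lg p)$ in both cases. The first thing to pin down is what the choice of $\Delta$ in Step~\ref{step:bound} buys: writing $N'_q=\lfloor\frac{q+1}\Delta N\rfloor-\lfloor\frac q\Delta N\rfloor\leqslant\lceil N/\Delta\rceil$ for the length of the $q$-th block,
\[
  N'_q\,(h+r\lg N)\;\leqslant\;\tfrac N\Delta\,(h+r\lg N)+(h+r\lg N)\;\leqslant\;p+h+r\lg N\;=\;\mathrm{O}(p),
\]
the last step using $h=\mathrm{O}(p)$ and $r=\mathrm{O}(\lg p)$ (the latter gives $r\lg N=\mathrm{O}((\lg p)^2)=\mathrm{O}(p)$) and, crucially, with a constant that depends only on the growth hypotheses and not on the particular $h,r$. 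Dually $\Delta=\mathrm\Theta\!\bigl(N(h+r\lg p)/p\bigr)$, i.e.\ $\Delta=\mathrm\Theta(h+r\lg p)$ if $\rho<\infty$ and $\Delta=\mathrm\Theta\!\bigl((h+r\lg p)/\lg p\bigr)$ if $\rho=\infty$.

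I would then analyse one pass through the loop body. By Proposition~\ref{prop:compl classical} with truncation order $N'_q$ and final index $\leqslant N$, Step~\ref{step:BinSplit} runs in $\mathrm{O}\!\bigl(\Mu(N'_q(h+r\lg N))(1+\lg N'_q)\bigr)=\mathrm{O}\!\bigl(\Mu(p)(1+\lg N'_q)\bigr)$ bit operations, returns an $\mathrm{O}(N'_q(h+r\lg N))=\mathrm{O}(p)$-bit matrix $\hat Q$, and uses $\mathrm{O}(p)$ bits of workspace. For Step~\ref{step:Trunc-Q}, the key observation is that the truncation precision $\delta_q=\frac1{2\Delta}M^{-\Delta+1}\varepsilon$ satisfies $\lg\delta_q^{-1}=p+(\Delta-1)\lg M+\mathrm{O}(\lg\Delta)$, while $(\Delta-1)\lg M\leqslant N\lg C=\mathrm{O}(N)=\mathrm{O}(p)$ because $M\leqslant C^{N/\Delta}$; hence $\lg\delta_q^{-1}=\mathrm{O}(p)$, so dividing the $\mathrm{O}(p)$-bit $\hat Q$ by its $\mathrm{O}(p)$-bit bottom-right entry and truncating — $\mathrm{O}(1)$ operations on $\mathrm{O}(p)$-bit numbers — costs $\mathrm{O}(\Mu(p))$ and produces an $\mathrm{O}(p)$-bit $\tilde Q^{(q)}$. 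The same accounting handles Step~\ref{step:accu}: $\|Q^{(q)}\|\leqslant M$ and $\|P^{(q)}\|\leqslant M^q\leqslant C^N$ make $\tilde Q^{(q)}$ and $\tilde P^{(q)}$ have $\mathrm{O}(p)$-bit entries, and a matrix product followed by a truncation whose precision again has exponent $\mathrm{O}(p)$ costs $\mathrm{O}(\Mu(p))$ and gives an $\mathrm{O}(p)$-bit $\tilde P^{(q+1)}$. So iteration $q$ costs $\mathrm{O}(\Mu(p)(1+\lg N'_q))$ time and $\mathrm{O}(p)$ space, with all constants uniform in $h,r$.

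Summing over $q$, using $N'_q\leqslant N$ and concavity of $\lg$, the loop runs in $\mathrm{O}\!\bigl(\Mu(p)\sum_q(1+\lg N'_q)\bigr)=\mathrm{O}(\Mu(p)\,\Delta\lg N)$, which by the estimates for $\Delta$ and $\lg N$ above is exactly~(\ref{eq:time complexity}) in each regime. For the space, the convention that $X^{(q)}$ is one reused memory cell means that at any instant only $\mathrm{O}(1)$ matrices with $\mathrm{O}(p)$-bit entries, the $\mathrm{O}(p)$-bit workspace internal to $\operatorname{BinSplit}$ (Proposition~\ref{prop:compl classical}), and a handful of $\mathrm{O}(p)$-bit scalars are live; since every bound just used is $\mathrm{O}(p)$ uniformly in $h,r$, the total is $\mathrm{O}(p)$ bits with a constant independent of $h$ and $r$. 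It then remains to bound the auxiliaries: $\Delta$ is obtained by arithmetic on $\mathrm{O}(\lg p)$-bit integers; $M$, computed as suggested, reduces to evaluating $\mathrm{O}(N)$ bounded factors of~(\ref{eq:norm P}) to $\mathrm{O}(\lg p)$ bits each, which — after recording the orders of magnitude of the finitely many coefficients involved — is a fixed-precision Horner evaluation of the $b_k$, costing $\mathrm{O}(N\,r\,\Mu(\lg p))$ bit operations; since $\Mu(\lg p)\leqslant(\lg p/p)\,\Mu(p)$ this sits well inside~(\ref{eq:time complexity}), and $M\leqslant C^{N/\Delta}$ follows from $\|B(n)\|=\mathrm{O}(1)$ uniformly (the boundedness assumptions of Section~\ref{sec:setting}) together with the upward rounding; finally the base cases of $\operatorname{BinSplit}$ and the $\mathrm{O}(\Mu(p))$ matrix--vector product for $y(\zeta)$ are absorbed as well.

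The step I expect to be the main obstacle is not any single estimate but the joint bit-size accounting: one has to check that the block products $P(\lfloor\frac q\Delta N\rfloor,\lfloor\frac{q+1}\Delta N\rfloor)$ — and therefore everything obtained from them by division and truncation — stay $\mathrm{O}(p)$ bits with a constant that does not covertly depend on $h$ or $r$ (which is exactly why $\Delta$ is set to $\lceil\frac Np(h+r\lg N)\rceil$), and that the large negative exponents $-\Delta+q+1$ occurring in the truncation precisions never inflate these sizes (which is exactly what $M\leqslant C^{N/\Delta}$ guarantees, via $(\Delta-1)\lg M=\mathrm{O}(N)=\mathrm{O}(p)$). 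Once that is in place, reproducing the two cases of~(\ref{eq:time complexity}) is just the identity $\sum_q(1+\lg N'_q)=\mathrm\Theta(\Delta\lg p)$ combined with~(\ref{eq:N vs p}).
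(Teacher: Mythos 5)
Your proposal is correct and follows essentially the same route as the paper's proof: bound each loop iteration by $\Omicron(\Mu(p)\lg p)$ via Proposition~\ref{prop:compl classical} and the choice of~$\Delta$, control the bit sizes of $\tilde P,\tilde Q$ through $\lg\varepsilon^{-1}+\Delta\lg M+\lg\Delta=\Omicron(p)$ using $\lg M=\Omicron(N/\Delta)$, sum over the $\Delta$ iterations with~(\ref{eq:N vs p}), and account for space via the $\Omicron(p)$-bit matrices, the linear-space subroutines and the $\Omicron(p)$ workspace of $\operatorname{BinSplit}$. Your added details (the $M^q\leqslant C^N$ bound on entry magnitudes, the explicit cost of computing~$M$) only make explicit what the paper leaves implicit.
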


We neglect the cost of finding~$N$ to avoid a lengthy discussion of the
complexity of the corresponding bound computation algorithms. It could
actually be checked to be polynomial in $r$ and $\lg p$.

\begin{proof}
  Computing the bound~$M$ using Equation~(\ref{eq:norm P}) as suggested is
  more than enough to ensure that $\lg M = \Omicron (N / \Delta
 )$. It requires~$\Omicron (N)$ arithmetic operations on
  $\Omicron (\lg p)$-bit numbers, that is, $o (N (\lg
  N)^2)$ bit operations.
  
  By Proposition~\ref{prop:compl classical}, each of the $\Delta$~calls to
  $\operatorname{BinSplit}$ requires
  \[ \Omicron \bigl(\Mu (\tfrac{N}{\Delta}  (h + r \lg N)
    ) \lg N\bigr) = \Omicron \bigl(\Mu (p) \lg p \bigr) \]
  bit operations. The resulting matrices~$Q^{(p)}$ all have
  size~$\Omicron (p)$, hence the divisions from
  Step~\ref{step:Trunc-Q} can be done in~$\Omicron (\Mu (p)
 )$ operations using Newton's
  method~{\cite[Chap.~9]{GathenGerhard2003}}. The truncations in Steps
  \ref{step:Trunc-Q}~and~\ref{step:accu} ensure that the bit sizes of
  $\tilde{P}$~and~$\tilde{Q}$ are always at most
  \begin{equation}
    \lg \varepsilon^{- 1} + \Delta \lg M + \lg \Delta + \Omicron (1
   ) = \Omicron (p) . \label{eq:working precision}
  \end{equation}
  It follows that the matrix multiplications from Step~\ref{step:accu} take
  $\Omicron (\Mu (p))$ operations each. Summing up,
  each iteration of the loop from Step~\ref{step:loop} can be performed in
  $\Omicron (\Mu (p) \lg p)$ operations, for a total
  of $\Omicron (\Delta \Mu (p) \lg p)$.
  Equation~(\ref{eq:time complexity}) follows upon setting $N = \Omicron
  (p)$ or $N = \Omicron (p / \lg p)$ according
  to~(\ref{eq:N vs p}).
  
  The required memory comprises space for the current values of
  $\tilde{P}^{(q)}$ and $Q^{(q)}$, any temporary
  storage used by the operations from Steps
  \ref{step:BinSplit}~to~\ref{step:accu}, and an additional $\Omicron (\lg p)$
  bits to manipulate auxiliary variables such as $M$~and~$q$. We have seen
  that $\tilde{P}^{(q)}$ and $Q^{(q)}$ have bit
  size~$\Omicron (p)$. Besides, our assumption that fast integer
  multiplication could be performed in linear space implies the same property
  for division by Newton's method. Thus, Steps
  \ref{step:Trunc-Q}~and~\ref{step:accu} use $\Omicron (p)$~bits
  of auxiliary storage. Finally, again by Proposition~\ref{prop:compl
  classical}, the calls to Algorithm~\ref{algo:BinSplit} use $\Omicron (
  (N / \Delta)  (h + r \lg p)) = \Omicron (p
 )$ bits of memory.
\end{proof}

Plugging Algorithm~\ref{algo:TruncBinSplit} into the numerical evaluation
algorithms mentioned at the end of Sect.~\ref{sec:classical} yields
corresponding improvements for the evaluation of D\mbox{-}finite functions at
more general points. Table~\ref{table:summary} summarizes the complexity
bounds we obtain. The omitted proofs are direct adaptations of those that
apply without
truncations~{\cite{ChudnovskyChudnovsky1990,vdH1999,Mezzarobba2011}}. There
would be much to say on the hidden constant factors.
The main result may be stated more precisely as follows.

\begin{table}[t]
\begin{center}
  \begin{tabular}{ccccc}
    \hline
    & \T\B & Time & Space (classical) & Space (trunc.)\\
    \hline
    $\rho < \infty$\T & BinSplit & $\Omicron (\Mu (p (h + r
    \lg p) \lg p))$ & $\Omicron (p (h + r \lg
    p))$ & $\Omicron (p)$\\
    & BitBurst & $\Omicron (\Mu (p (\lg p)^2)
   )$ & $\Omicron (p \lg p)$ & $\Omicron (p
   )$\\
    $\rho = \infty$ & BinSplit & $\Omicron (\Mu (p (h + r
    \lg p)))$ & $\Omicron (p (r + h / \lg p))$ &
    $\Omicron (p)$\\
    & BitBurst & $\Omicron (\Mu (p (\lg p)^2)
   )$ & $\Omicron (p)$ & $\Omicron (p)$ \\
    \hline
  \end{tabular}
\end{center}
  \caption{\label{table:summary}Complexity of some D\mbox{-}finite function
  evaluation algorithms based on binary splitting. The rows labeled
  ``BinSplit'' summarize the cost of computing a single sum by binary
  splitting, with or without truncations. Those labeled ``BitBurst'' refer to
  the computation of $y (\zeta)$ by the ``bit burst'' method,
  using either of Algorithm~\ref{algo:BinSplit} and
  Algorithm~\ref{algo:TruncBinSplit} at each step. All entries are asymptotic
  bounds as $p, h \rightarrow \infty$ with $h = \Omicron (p)$. In
  the ``BinSplit'' case, we also let $r$~tend to infinity under the assumption
  that $r = \Omicron (\lg p)$. The whole point of the ``bit
  burst'' method is to get rid the dependency on~$h$.}
\end{table}

\begin{theorem}
  \label{prop:trunc-ancont}Let $U \subset \mathbbm{C}$ be a simply connected
  domain such that $0 \in U$ and $a_r (z) \neq 0$ for all $z \in
  U$. Fix $\ell_0, \ldots, \ell_{r - 1} \in \mathbbm{C}$ and $\zeta \in U$.
  Assume that $0$ is an ordinary point of (\ref{eq:deq}), and let~$y$ be the
  unique solution of~(\ref{eq:deq}) defined on~$U$ and such that $y^{(k
 )} (0) = \ell_k$, $0 \leqslant k < r$. Then, the value $y
  (\zeta)$ may be computed with error bounded by $2^{- p}$ in
  time $\Omicron (\Mu (p)  (\lg p)^2)$
  and space $\Omicron (p)$, not counting the resources needed to
  approximate the $\ell_k$ or $\zeta$ to precision $\Omicron (p)$ or to find
  suitable truncation orders for the Taylor series involved.
\end{theorem}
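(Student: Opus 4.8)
The plan is to combine the bit-burst construction with Algorithm~\ref{algo:TruncBinSplit}, keeping track of the space used at every stage so that nothing ever exceeds $\Omicron(p)$ bits. First I would recall the geometry of the bit-burst method: choose a polygonal path $0 = z_0, z_1, \ldots, z_m = \zeta$ inside $U$ with each $z_k$ a point of $\mathbbm{K}$ whose bit size is $\Omicron(2^k)$ and with $|z_{k+1}-z_k|$ shrinking geometrically, so that after $m = \Omicron(\lg p)$ steps one reaches an approximation of $\zeta$ good to $2^{-p}$. At step $k$ one transports the vector of initial data from $z_k$ to $z_{k+1}$ by summing, via binary splitting, the local Taylor expansion of the fundamental matrix of~(\ref{eq:deq}) at $z_k$ to an order $N_k$ chosen (as in~(\ref{eq:N vs p}), applied to the shifted equation) so that the truncation error is below $2^{-p-\Omicron(\lg p)}$; since $|z_{k+1}-z_k|$ is geometrically small relative to the distance from $z_k$ to the nearest zero of $a_r$, we get $N_k = \Omicron(p/k)$ for $k \geq 1$ (and $N_0 = \Omicron(p)$), so $\sum_k N_k = \Omicron(p\lg p)$, which already recovers the known time bound $\Omicron(\Mu(p)(\lg p)^2)$ by Proposition~\ref{prop:compl classical} once one notes that the coefficient size $h_k$ of the shifted recurrence at $z_k$ is $\Omicron(2^k) = \Omicron(p)$ and $r = \Omicron(1)$ here.

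The heart of the matter — and the only place where truncation is essential — is the space analysis. The naive bit-burst using plain $\operatorname{BinSplit}$ costs $\Omicron(p\lg p)$ bits at the step $k \approx \lg p$ where $h_k = \mathrm\Theta(p)$, because there Corollary~\ref{cor:eval} gives space $\Omicron(\ell p) = \Omicron(p\lg p)$. So at each bit-burst step I would instead invoke $\operatorname{TruncBinSplit}$, suitably parametrized: set its internal target precision to $p' = p + \Omicron(\lg p)$ (enough guard bits to absorb the $m = \Omicron(\lg p)$ successive matrix-vector products along the path, each multiplying by a matrix of norm $\Omicron(1)^{N_k}$, i.e. $\lg$-norm $\Omicron(p)$ — so a fixed $\Omicron(p)$ total loss, hence $\Omicron(\lg p)$-ish guard digits after rescaling, which I would pin down by the same telescoping estimate~(\ref{eq:induction}) used in the first proposition above). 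By Proposition~\ref{prop:compl trunc}, applied with $h = h_k = \Omicron(p)$ and $r = \Omicron(1) = \Omicron(\lg p)$, each such call runs in space $\Omicron(p)$ with a constant independent of $h_k$ and $r$ — that independence is exactly what we need, since $h_k$ varies along the path. Between steps we carry only the $\Omicron(r)$-dimensional transported solution vector truncated to $\Omicron(p)$ bits, plus $\Omicron(\lg p)$ bits of bookkeeping (the index $k$, the path point $z_k$ at the current precision, the order $N_k$). Reusing the same memory location for the $\operatorname{TruncBinSplit}$ workspace at every step — as the $X^{(q)}$ convention in Algorithm~\ref{algo:TruncBinSplit} already suggests — keeps the peak at $\Omicron(p)$.

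For the time bound I would sum the per-step costs: step $k$ costs $\Omicron(\Mu(p)\,(h_k + r\lg p)\,\lg p) = \Omicron(\Mu(p)\,p\,\lg p)$ in the worst case $h_k = \mathrm\Theta(p)$, but more carefully $h_k = \Omicron(2^k)$, so $\sum_{k=0}^{m}\Mu(p)(2^k + \lg p)\lg p = \Omicron(\Mu(p)(\lg p)^2)$ using $\sum 2^k = \Omicron(2^m) = \Omicron(p)$ and the subadditivity of $\Mu$; this matches the stated $\Omicron(\Mu(p)(\lg p)^2)$. I would also note that forming the shifted differential equation at each $z_k$ — a Taylor shift of the polynomials $a_j$ — costs $\Omicron(\Mu(p)\lg p)$ per step in space $\Omicron(p)$, within budget. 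The main obstacle I anticipate is bookkeeping the \emph{error} rather than the space: one must verify that the guard-digit count stays $\Omicron(\lg p)$, i.e. that the product of the $m$ transport matrices has norm only $2^{\Omicron(p)}$ and not worse, which follows because on a compact sub-path of $U$ the fundamental matrix is bounded, so the composed transport operator from $0$ to $\zeta$ has norm $\Omicron(1)$ — the large individual $\|P(0,N_k)\| = 2^{\Omicron(p)}$ factors are local artifacts of the unnormalized recurrence and cancel against the $b_0(n)$ denominators exactly as in Step~\ref{step:Trunc-Q}. Making this cancellation rigorous across path steps (as opposed to within one $\operatorname{BinSplit}$ call, where the excerpt already handles it) is the one genuinely new estimate; everything else is, as the paper says, a direct adaptation of the truncation-free bit-burst analysis of~\cite{ChudnovskyChudnovsky1990,Mezzarobba2011}.
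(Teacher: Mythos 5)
Your overall plan coincides with the paper's: run the bit-burst analytic continuation and substitute $\operatorname{TruncBinSplit}$ (Algorithm~\ref{algo:TruncBinSplit}) for plain binary splitting at each of the $m = \Omicron(\lg p)$ path steps, the decisive observation being that the $\Omicron(p)$ space bound of Proposition~\ref{prop:compl trunc} has a hidden constant independent of $h$ and $r$ (so it survives $h_k = \mathrm\Theta(p)$ at the last steps), with the workspace reused from one step to the next; the paper gives no more detail than this and defers to the truncation-free analyses. However, your quantitative bookkeeping of the time bound fails as written. In the bit-burst path, $|z_{k+1}-z_k|$ is of order $2^{-2^k}$ (doubly exponential, not geometric), so the truncation order at step $k$ is $N_k = \Omicron(p/2^k)$, not $\Omicron(p/k)$; the whole point is the compensation $N_k h_k = \Omicron(p)$, which gives $\Delta_k = \lceil (N_k/p)(h_k + r \lg N_k) \rceil = \Omicron(1 + 2^{-k}\lg p)$, hence a cost $\Omicron(\Delta_k \Mu(p) \lg p)$ per step and $\Omicron(\Mu(p)(\lg p)^2)$ in total. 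With $N_k = \Omicron(p/k)$, the step $k \approx \lg p$ has $N_k h_k \approx p^2/\lg p$ and the running time becomes nearly quadratic. Moreover, your closing sum $\sum_{k \leqslant m} \Mu(p)(2^k + \lg p)\lg p$ is $\Omicron(\Mu(p)\, p \lg p)$, not $\Omicron(\Mu(p)(\lg p)^2)$: once the factor $2^k$ sits outside $\Mu(\cdummy)$, subadditivity cannot absorb it; it must be kept inside, as $\Mu\bigl(N_k(h_k + r\lg p)\bigr) = \Mu(\Omicron(p))$ per step.

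The estimate you single out as ``genuinely new''\emdash controlling the composition of transport matrices across path steps\emdash is in fact the part that needs nothing new. Each call to $\operatorname{TruncBinSplit}$ already performs the division by the bottom-right entry (Step~\ref{step:Trunc-Q}) and returns an $\Omicron(p)$-bit matrix within $2^{-p-\Omicron(\lg p)}$ of the exact transition matrix of~(\ref{eq:deq}) between consecutive path points; these transition matrices have norms bounded by a constant on the compact path in~$U$, so composing $\Omicron(\lg p)$ of them amplifies the accumulated error by at most $2^{\Omicron(\lg p)}$, and $\Omicron(\lg p)$ extra guard bits in each call's target accuracy suffice. This is verbatim the error-propagation argument of the truncation-free bit-burst method, which is exactly why the paper can describe the adaptation as direct. (Note also that this $\Omicron(\lg p)$ concerns only the outer accuracy targets; inside each call the working precision is still $p + \mathrm\Theta(p)$ by~(\ref{eq:working precision}), which is harmless since it remains $\Omicron(p)$.)
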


Finally, some comments are in order regarding the ``working precision'', that
is, the size~$p'$ of the entries of $\tilde{P}$ and $\tilde{Q}$ in
Algorithm~\ref{algo:TruncBinSplit}. Equation~(\ref{eq:working precision})
suggests a number of ``guard digits'' $p' - p = \mathrm\Theta (p)$.
Moreover, if the bound~$M$ is computed using~(\ref{eq:norm P}), the hidden
constant depends on the choice of~$\left\| \cdummy \right\|$.

Let $B_{\infty} = \lim_{n \rightarrow \infty} B (n)$. For the
norm~$\left\| \cdummy \right\|_{\mathrm{opt}}$ given by Lemma~\ref{lemma:norm}
below, we have
\[ \lg \left\| P (a, b) \right\|_{\mathrm{opt}} \leqslant \sum_{n =
   a}^{b - 1} \lg \bigl(\left\| B_{\infty} \right\|_{\mathrm{opt}} + \Omicron
   (n^{-1})\bigr) = \Omicron \bigl(\lg (b - a)\bigr), \]
and hence $\lg \left\| P (a, b) \right\| = \Omicron (\lg
(b - a))$ for any norm~$\left\| \cdummy \right\|$.

\begin{lemma}
  \label{lemma:norm}There exists a matrix norm~$\left\| \cdummy
  \right\|_{\mathrm{opt}}$ such that $\|B_{\infty} \|_{\mathrm{opt}} = 1.$
\end{lemma}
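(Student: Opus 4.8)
The plan is to pin down $B_{\infty}$ explicitly, verify that its spectral radius equals~$1$ and is attained at a \emph{simple} eigenvalue, and then run the classical ``adapted norm'' construction, which yields exactly~$1$ (rather than $1+\varepsilon$) precisely because that dominant eigenvalue is semisimple.

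First I would compute the limit. Since $0$~is an ordinary point, $a_r(0)=a_{r,0}\neq 0$, so $b_0(n)=\sum_k a_{k,0}(n+r)^k$ is a polynomial in~$n$ of exact degree~$r$ with leading coefficient~$a_{r,0}$, while each $b_j(n)=\sum_k a_{k,j}(n+r-j)^k$ has degree at most~$r$ with coefficient of~$n^r$ equal to~$a_{r,j}$; hence $b_j(n)/b_0(n)\to a_{r,j}/a_{r,0}$ and $B(n)\to B_{\infty}=\bigl(\begin{smallmatrix}\zeta C_{\infty}&0\\ R&1\end{smallmatrix}\bigr)$, where $C_{\infty}$ is the companion matrix with characteristic polynomial $a_{r,0}^{-1}\lambda^{s}a_r(1/\lambda)=a_{r,0}^{-1}\sum_{j}a_{r,j}\lambda^{s-j}$. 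The nonzero roots of this polynomial are the reciprocals of the zeros of~$a_r$, so every eigenvalue~$\mu$ of~$C_{\infty}$ satisfies $|\mu|\leqslant \rho^{-1}$, and therefore every eigenvalue of $\zeta C_{\infty}$ has modulus at most $|\zeta|/\rho<1$. In particular $1$~is not an eigenvalue of $\zeta C_{\infty}$, and the block-triangular shape of $B_{\infty}$ gives $\det(\lambda I-B_{\infty})=(\lambda-1)\det(\lambda I-\zeta C_{\infty})$. Thus $1$~is a \emph{simple} eigenvalue of~$B_{\infty}$ and all other eigenvalues have modulus~$<1$; in particular the spectral radius of~$B_{\infty}$ is~$1$.

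Next I would split $\mathbbm{C}^{s+1}=E\oplus F$ into $B_{\infty}$-invariant subspaces via the generalized eigenspace decomposition, with $E=\ker(B_{\infty}-I)$ one-dimensional (it is the whole generalized eigenspace of the eigenvalue~$1$, that eigenvalue being simple) and $F$ the sum of the generalized eigenspaces of the remaining eigenvalues. Then $B_{\infty}$ acts as the identity on~$E$, while $T\assign B_{\infty}|_F$ has spectral radius~$<1$. Fixing any norm $\left|\cdummy\right|$ on~$F$, Gelfand's formula gives $\left|T^k\right|=\Omicron(\mu^k)$ for some $\mu\in(0,1)$, so $\left\|v\right\|_F\assign\sum_{k\geqslant 0}\left|T^k v\right|$ is a well-defined norm on~$F$ with $\left\|Tv\right\|_F=\left\|v\right\|_F-\left|v\right|\leqslant\left\|v\right\|_F$, i.e.\ the operator norm of~$T$ for $\left\|\cdummy\right\|_F$ is $\leqslant 1$. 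Finally, fixing any norm $\left\|\cdummy\right\|_E$ on~$E$ and writing $x=x_E+x_F$ along the decomposition, I would set $\left\|x\right\|_{\mathrm{opt}}\assign\max(\left\|x_E\right\|_E,\left\|x_F\right\|_F)$. Since $B_{\infty}$ respects $E\oplus F$, we get $\left\|B_{\infty}x\right\|_{\mathrm{opt}}=\max(\left\|x_E\right\|_E,\left\|Tx_F\right\|_F)\leqslant\left\|x\right\|_{\mathrm{opt}}$, so the induced (hence submultiplicative) matrix norm satisfies $\left\|B_{\infty}\right\|_{\mathrm{opt}}\leqslant 1$; and since $1$~is an eigenvalue of~$B_{\infty}$, also $\left\|B_{\infty}\right\|_{\mathrm{opt}}\geqslant 1$, whence equality.

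The only genuinely delicate point is the one that makes the conclusion tight, namely that the eigenvalue of maximal modulus is semisimple; here it comes for free because $1$~turns out to be a \emph{simple} eigenvalue of~$B_{\infty}$, which ultimately rests on the location of the zeros of~$a_r$, equivalently on $|\zeta|<\rho$. Everything else---existence of the limit~$B_{\infty}$, the adapted norm on~$F$, and the gluing by a maximum---is routine linear algebra. One could instead quote the classical characterization of power-bounded matrices and take $\left\|v\right\|_{\mathrm{opt}}\assign\sup_{k\geqslant 0}\left|B_{\infty}^k v\right|$, but the explicit invariant decomposition makes the estimate $\left\|B_{\infty}\right\|_{\mathrm{opt}}\leqslant 1$ completely transparent.
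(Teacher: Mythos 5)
Your proof is correct, and it reaches the conclusion by a genuinely different construction of the norm than the paper. The preparatory part is the same in both arguments: you identify $C_{\infty}$ as the companion matrix of $\lambda^{s}a_r(1/\lambda)$ (up to the normalization by $a_{r,0}$), deduce from $|\zeta|<\rho$ that $\zeta C_{\infty}$ has spectral radius $<1$, and from the block-triangular shape of $B_{\infty}$ that $1$ is a simple, dominant eigenvalue. Where you diverge is in how the norm is built. The paper mimics the proof of Householder's theorem: it conjugates $B_{\infty}$ by $\Pi=\operatorname{diag}(\Gamma,1)\cdot\operatorname{diag}(1,\lambda,\ldots,\lambda^{s})$, with $\Gamma$ reducing $C_{\infty}$ to Jordan form and $\lambda$ a small scaling parameter, and sets $\|A\|_{\mathrm{opt}}=\|\Pi^{-1}A\Pi\|_1$; exactness of the value $1$ comes from the fact that the column carrying the eigenvalue $1$ is exactly a unit vector, while the other column sums are $<1$ once the off-diagonal entries are scaled down. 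You instead split $\mathbbm{C}^{s+1}=E\oplus F$ along the spectral projection for the eigenvalue $1$, put the classical adapted norm $\sum_{k\geqslant 0}|T^k v|$ on $F$ (Gelfand), and glue by a maximum; exactness comes from $B_{\infty}$ acting as the identity on $E$ and as a contraction on $F$, plus the trivial lower bound from the eigenvalue $1$. Both yield submultiplicative (induced) norms, so both prove the lemma; your correct emphasis that semisimplicity of the dominant eigenvalue is what makes the value exactly $1$ matches the substance of the paper's argument. The practical difference is that the paper's norm has the explicit form $\|\Pi^{-1}\cdot\,\Pi\|_1$ with an explicit admissible $\lambda$, which the surrounding text exploits algorithmically (one computes a numerical approximation $\tilde{\Pi}$ and uses $\|\tilde{\Pi}^{-1}\cdot\,\tilde{\Pi}\|_1$ in the bound~(\ref{eq:norm P}) to reduce overestimation of $M$), whereas your series norm, while cleaner and free of Jordan-form bookkeeping, is not as directly usable as an effectively computable norm in the algorithm.
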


\begin{proof}
  We mimic the classical proof of Householder's
  theorem~{\cite[Sect.~4.2]{Serre2002}}. By~(\ref{eq:coeffs rec}), the limit
  $C_{\infty} = \lim_{n \rightarrow \infty} C (n)$ is the
  companion matrix of the polynomial $z^s a_r (1 / z)$. The
  eigenvalues of $\zeta C_{\infty}$ are strictly smaller than~$1$ in absolute
  value since $\left| \zeta \right| < \rho$. Let $\Gamma$ be such that
  $\Gamma^{- 1} C_{\infty} \Gamma$ is in (lower) Jordan normal form.
  Let $\lambda > 0$, and set $\Pi = \operatorname{diag} (\Gamma, 1)
  \cdot \operatorname{diag} (1, \lambda, \ldots, \lambda^s)$. Then $\Pi^{-
  1} B_{\infty} \Pi$ is lower triangular, with off-diagonal entries tending to
  zero as $\lambda \to 0$. Hence we have $\| \Pi^{- 1} B_{\infty}
  \Pi \|_1 = 1$ for $\lambda$ small enough. We choose such a~$\lambda$
  (e.g., $\lambda = \frac{1 - | \zeta | / \rho}{2 \max (1,
  | \zeta |)}$) and set $\| A \|_{\mathrm{opt}} =
  \| \Pi^{- 1} A \Pi \|_1$.
\end{proof}

One way to eliminate the overestimation in the algorithm is to compute
approximations of the matrices $P (\lfloor \frac{q}{\Delta} N \rfloor,
\lfloor \frac{q + 1}{\Delta} N \rfloor)$ with $\Omicron (\lg p
)$ digits of precision before doing the computation at full precision.
One then uses the norms of these approximate products instead of those of the
individual~$B (n)$ to determine~$M$. We can also explicitly
construct an approximation $\tilde{\Pi}$ of the matrix~$\Pi$ from the proof of
Lemma~\ref{lemma:norm} precise enough that $\| \tilde{\Pi}^{- 1} B_{\infty} 
\tilde{\Pi} \|_1 = 1$, and use the corresponding norm instead of $\left\|
\cdummy \right\|_1$ in~(\ref{eq:norm P}).
(Compare~{\cite[Algorithm~B]{vdH1999}}.) Other options include computing
symbolic bounds on the coefficients of ${P (a, b)}$ as a function
of $a$~and~$b$~{\cite{MezzarobbaSalvy2010}} or finding an explicit
integer~$n_0$ such that $n \geqslant n_0 \Rightarrow \left\| B (n
) \right\|_{\mathrm{opt}} = 1$ based on the symbolic expression of~$n$.
Which variant to use in practice depends on the features of the implementation
platform.

In any case, replacing the $\Omicron (\cdummy)$ in the space
complexity bound by an explicit constant would also require more specific
assumptions on the memory representation of the objects we work with, as well
as finer control on the space complexity of integer multiplication and
division (see, e.g., Roche~{\cite{Roche2011}}).

\section{Final Remarks}\label{sec:final}

\subsection*{What we lose and what we retain.}The price we pay for the reduced memory usage is
the ability to easily extend the computation to higher precision. Indeed, the
classical algorithm computes the exact value of the matrix~$P (0, N
)$, from which we can deduce $P (0, N')$ for any $N' > N$
in time roughly proportional to $N' - N$. This is no longer true with the
linear-space variant. In some ``lucky'' cases where~$P (0, N)$
can be represented exactly in linear space, it is possible to get the memory
usage down to~$\Omicron (N)$ while preserving restartability: see
Cheng et al.~{\cite{ChengHanrotThomeZimaZimmermann2007}} and the references
therein. Additionally, the resulting running time is reportedly lower than
using truncations, probably owing to the fact that the size of the subproducts
in the $\lg (N / \Delta)$ lower levels of the tree is reduced as
well. Unfortunately, the applicability of the technique is limited to very
special cases.

Two other traditional selling points of the
binary splitting method are its easy parallelization and good memory locality.
Nothing is lost in this respect, except that the memory bound grows to $\mathrm\Theta
(t \cdot p)$ when using $t = o (\lg N)$ parallel
tasks in the approximate part of the computation.

\subsection*{Generalizations.}The idea of binary splitting ``with
truncations'' and the outline of its analysis adapt to various settings not
covered here. For instance, we may consider systems of linear differential
equations instead of scalar equations~{\cite{ChudnovskyChudnovsky1990}}.
Product trees of matrices over number fields $\mathbbm{K}' =\mathbbm{Q} (
\alpha)$ other than $\mathbbm{Q} (i)$ or over rings of
truncated power series $\mathbbm{K}' \left[ \left[ \varepsilon \right] \right]
/ \left\langle \varepsilon^k \right\rangle$ are also useful, respectively, to
evaluate limits of D\mbox{-}finite functions at regular singular points of
their defining equations, and to make the analytic continuation process more
efficient for equations of large order~{\cite{vdH1999,Mezzarobba2011}}. It is
not essential either that the coefficients of the recurrence relation
satisfied by the~$y_n$ are rational functions of~$n$: all we really ask is
that they have suitable growth properties and can be computed fast.

\subsection*{Implementation.}We are working on an implementation of the
algorithm from Sect.~\ref{sec:trunc} in an experimental branch of the software
package NumGfun~{\cite{Mezzarobba2010}}. The current state of the code is
available from
\begin{center}
    \url{http://marc.mezzarobba.net/supplementary-material/trunc-CASC2012/}.
\end{center}
A comparison (updated periodically) with the implementation of binary splitting
without truncations used in previous releases of NumGfun is also included.

\subsection*{Acknowledgments.}I would like to thank Nicolas Brisebarre and
Bruno Salvy for encouraging me to write this note and offering useful
comments, and Anne Vaugon for proofreading parts of it.

\begin{tikzpicture}[remember picture,overlay]
\node [yshift=4.3cm, anchor=north] at (current page.south)
[text width=\textwidth, font=\scriptsize, align=justify] {
    This work is in the public domain. As such, it is not
    subject to copyright. Where it is not legally possible to consider this
    work as released into the public domain, any entity is granted the right
    to use this work for any purpose, without any conditions, unless such
    conditions are required by law. \\[.2\baselineskip]
    The present version is identical except for minor corrections and
    formatting differences to the article published in
    V.P. Gerdt et al. (Eds.), CASC 2012, LNCS 7442, pp. 212–223, 2012.
    The \href{http://www.springerlink.com/content/qr7885031g607253/}
    {original version} is available at www.springerlink.com.
};
\end{tikzpicture}

\end{document}